\documentclass[12pt,a4paper,oneside,reqno]{amsart}

\usepackage{amssymb,amsmath}
\usepackage{graphicx,psfrag}
\usepackage{stmaryrd}%\fatsemi

\usepackage{color}

\definecolor{thmcolor}{rgb}{0,0,.4} 
\definecolor{remarkcolor}{rgb}{0,.2,0} 
\definecolor{proofcolor}{rgb}{.4,0,0} 
\definecolor{quecolor}{rgb}{.2,.2,0} 
\definecolor{axcolor}{rgb}{.3,0,.3}
\definecolor{thmbgcolor}{rgb}{0.9,0.9,1} 
\definecolor{rmbgcolor}{rgb}{0.9,1,0.9} 
\definecolor{proofbgcolor}{rgb}{1,0.9,0.9}

\theoremstyle{definition} \newtheorem{thm}{\colorbox{thmbgcolor}{\textcolor{thmcolor}{Theorem}}}[section] 
\theoremstyle{definition} \newtheorem{cor}[thm]{\colorbox{thmbgcolor}{\textcolor{thmcolor}{Corollary}}} 
\theoremstyle{definition} 
\theoremstyle{definition} \newtheorem{prop}[thm]{\colorbox{thmbgcolor}{\textcolor{thmcolor}{Proposition}}}
 
\theoremstyle{remark}  
\theoremstyle{remark} 
\theoremstyle{definition}  
\theoremstyle{definition} \newtheorem{rem}[thm]{\colorbox{rmbgcolor}{\textcolor{remarkcolor}{Remark}}}
\theoremstyle{definition}

\newcommand{\taxis}{\mathsf{t\text{-}axis}}

\newcommand{\timed}{\mathsf{time}}
\newcommand{\sqspace}{\mathsf{space}^2}

%{\mathsf{v}^2}
%{\mathsf{v}^2}
\newcommand{\ls}{\mathsf{c}}

\newcommand{\vx}{\mathbf{\bar x}}
\newcommand{\vy}{\mathbf{\bar y}}

\newcommand{\vu}{\mathbf{\bar u}}
\newcommand{\de}{:=}%{\stackrel{def}{=}}

\newcommand{\defiff}{\ \stackrel{\;def}{\Longleftrightarrow}\ }

\newcommand{\IOb}{\ensuremath{\mathsf{IOb}}} % set of inertial observers
\newcommand{\IB}{\ensuremath{\mathsf{IB}}} % set of inertial observers
 % set of observers
\newcommand{\B}{\ensuremath{\mathit{B}}} % set of bodies
\newcommand{\Ph}{\ensuremath{\mathsf{Ph}}} % set of photons
\newcommand{\LS}{\ensuremath{\mathsf{LS}}} 
\newcommand{\Q}{\ensuremath{\mathit{Q}}} % set of quantities
\newcommand{\W}{\ensuremath{\mathsf{W}}} % world-view relation
\newcommand{\ev}{\ensuremath{\mathsf{ev}}} % event
\newcommand{\Id}{\ensuremath{\mathsf{Id}}} % identity mapping
 % Domain
 % Domain
\newcommand{\ax}[1]{\textcolor{axcolor}{\ensuremath{\mathsf{#1}}}} 
\newcommand{\wl}{\ensuremath{\mathsf{wl}}}

\newcommand{\w}{\ensuremath{\mathsf{w}}}

\begin{document}

\title[The existence of FTL particles is consistent with SR]{The
  existence of superluminal particles is consistent with the kinematics of
  Einstein's special theory of relativity}
\author{Gergely Sz\'ekely}
\address{Alfr\'ed R\'enyi Institute of Mathematics\\
of the Hungarian Academy of Sciences\\
Budapest P.O.Box 127, H-1364, Hungary}
\email{szekely.gergely{@}renyi.mta.hu.}
%\date{\today}

\keywords{special relativity, superluminal motion, tachyons, axiomatic method, first-order logic}

\begin{abstract} 
Within an axiomatic framework of kinematics, we prove that the
existence of faster than light particles is logically independent of
Einstein's special theory of relativity. Consequently, it is
consistent with the kinematics of special relativity that there might
be faster than light particles.
\end{abstract}

\maketitle

\section{Introduction}

From time to time short-lived experimental results appear that suggest
the existence of FTL objects. Recently, the OPERA experiment, see
\cite{opera}, raised the interest in the possibility of FTL
particles. The trouble is that if there are FTL particles, then
several branches of the tree that grew out of relativity theory die
out, e.g., the ones that directly assume the nonexistence of
superluminal objects. Weinberg--Salam theory is a concrete example for a
  theory which have to be modified if someone discovers FTL particles,
  see M\'esz\'aros \cite{Meszaros1}, \cite{Meszaros2}.

The OPERA result has turned out to be erroneous, but the possibility
will always be there that one day an experiment will prove the
existence of FTL particles. Therefore, the question 
\begin{center}
\it Which parts of tree sprung out of relativity theory would
survive the discovery of FTL particles?
\end{center}
remained interesting and relevant for further investigation.  In
this paper, we axiomatically show that the roots of the metaphorical
tree surely endure any experiment proving the existence of FTL objects
since their existence is completely consistent with the kinematics of
special relativity.

The investigation of superluminal motion in relativity theory goes
back (at least) to Tolman, see \cite[p.54-55]{tolman}.\footnote{A
    detailed history of the tachyon concept tracing back even to
    pre-relativistic times can be found in \cite{Froman}.} After
showing that faster than light (FTL) particles travel back in time
according to some observers,\footnote{This observation is the basis of
  several causal paradoxes (i.e., seemingly contradictory statements)
  concerning FTL particles.} Tolman writes: ``Such a condition of
affairs might not be a logical impossibility; nevertheless its
extraordinary nature might incline us to believe that no causal
impulse can travel with a velocity greater than that of light.''  It
is interesting to note that Tolman has not claimed that relativity
theory implies the impossibility of the existence of superluminal
particles; he just claims that, if they exists, they have some
``extraordinary'' properties.

In 1962 Bilaniuk, Deshpande and Sudarshan, in their pioneering
article, introduce a reinterpretation principle suggesting that
superluminal particles are consistent with relativity theory
\cite{BilaniukDeshpandeSudarshan}. Since then a great many works
dealing with superluminal motion have appeared in the literature.  The
extensive survey of Recami reviews the papers dealing with
superluminal motion before 1986 \cite{Recami86}.  For more recent
papers concerning FTL motion, see, e.g., Arntzenius \cite{Arntzenius},
Chashchina-Silagadze~\cite{ChaSil}, Geroch \cite{Geroch},
Jentschura~\cite{Jentschura}, Jentschura--Wundt~\cite{JW12},
  Nikoli{\'c}~\cite{Nikolic}, Matolcsi--Rodrigues
\cite{matolcsi-ftl}, Mittelstaedt \cite{mittelstaedt-ftl}, Recami
\cite{recami-ftl}, \cite{recami-ftl2}, \cite{Recami09},
Selleri~\cite{selleri-ftl}, Recami--Fontana--Garavaglia
\cite{RFG-ftl}, Weinstein~\cite{weinstein-ftl},
Zamboni-Rached--Recami--Besieris~\cite{Zamboni-Rached} and
references therein.

These papers contain various non-axiomatic theories of FTL
particles compatible with relativity.  However, the only
framework where the question of consistence can properly be
investigated is the axiomatic framework of mathematical logic.
Therefore, in this paper, we take one step further and
  investigate the consistency question of FTL particles within
  mathematical logic.

Based on Einstein's original postulates, we formalize the kinematics
of special relativity within an axiomatic framework; and we
prove that axiom system \ax{SR} (see, p.\pageref{SR}) capturing the
kinematics of Einstein's special relativity does not contradict the
existence of FTL particles.

The fact that the axioms of relativistic kinematics do not contradict
the existence of FTL particles is interesting by itself. However, it
leaves the question open whether the axioms of relativistic dynamics
contradicts superluminal motion of particles or not.  In a forthcoming
paper, we will show, with the axiomatic rigor of this paper,
that even relativistic particle dynamics is consistent with the
existence of FTL particles, see \cite{MSzFTL}, as it is
  already suggested by the literature.

As general relativity is more general than special relativity, every
model (solution) of the axioms of special relativity is also a
model of the axioms of general relativity.\footnote{For an axiom
  system of general relativity explicitly reflecting this fact, see
  \cite{synthese}.}  Consequently, if special relativity has a model
allowing FTL particles, then general relativity also has such a
model. Therefore, our result implies that, the existence of
superluminal particles is consistent with general relativity.
  
We show that the statement ``there can be faster than light
particles'' is logically independent of the kinematics of special
relativity. This means that we can add either this assumption or its
opposite to the axioms of special relativity without getting a theory
containing contradictions.  

This logical independence is completely analogous to that
Euclid's postulate of parallels is independent of the rest of his
axioms (in this case two different consistent theories extending the
theory of absolute geometry are Euclidean geometry and hyperbolic
geometry).

First in Section~\ref{sec-inf}, we explain our result and axiomatic
framework without going into the details of formalization. Then in
Section~\ref{sec-lang}, we recall a first-order logic framework of
kinematics from the literature with a minor modification fitting it to
formalize Einstein's original postulates. Then in
Section~\ref{sec-ax}, we formulate Einstein's original two postulates
and supplement them with some natural ones (e.g., with the statement
that different inertial observers see the same events) which were
implicitly assumed by Einstein, too. It is also a benefit of using
first-order logic that we have to reveal all our tacit assumptions.

Within this axiomatic framework, we formulate and prove our main
result, namely that the existence of FTL particles is logically
independent of the kinematics of special relativity, i.e., we prove
that neither the existence nor the nonexistence of FTL objects follows
from the theory, see Theorem~\ref{thm-indep}. Consequently, it is
consistent with the kinematics of special relativity that there are
FTL particles, which can of course carry ``information,'' see, e.g.,
Section \ref{sec-ftl}. In Section \ref{sec-no}, we will discuss why
the possibility of sending particles back to the past does not
necessarily lead to a logical contradiction even if we leave our safe
framework of kinematics.

\section{Informal statement of the main result}
\label{sec-inf}
Before we get wrapped up in details of formalization, let us first
state our main result informally. We assume the formalized versions of
Einstein's two original postulates of \cite{einstein}:
\begin{itemize}
\item principle of relativity (see \ax{SPR} on p.\pageref{spr}), and
\item the light axiom (see \ax{AxLight} on p.\pageref{axlight}).
\end{itemize}
To see the boundaries of special relativity 
clearly, we list all the other assumptions. Thees assumptions are
tacitly there in all the approaches to special relativity:
\begin{itemize}
\item Physical quantities satisfy some nice properties of real numbers (see \ax{AxOField} on p.\pageref{axofield}).
\item Inertial observers coordinatize the same outside reality (see \ax{AxEv} on p.\pageref{axev}).
\item Inertial observers can move with any speed slower than that of light (see \ax{AxThExp} on p.\pageref{axthexp}).
\item Inertial observers are stationary according to themselves (see \ax{AxSelf} on p.\pageref{axself}).
\item Inertial observers (can) use the same units of measurements (see \ax{AxSymD} on p.\pageref{axsymd}).
\end{itemize}
Because it is not at all clear which assumptions are unquestionable, we do not distinguish between axioms and postulates, rather we follow the modern axiomatic approach, i.e., we call every assumption axiom but treat them as questionable postulates. 

The main result of this paper states that {\it the axioms above
  neither prove nor refute the existence of FTL particles.}  To prove
this statement, we construct two models (solutions) of the axioms
above such that one of them allows the existence of FTL particles but
the other does not.

The construction of one observer's worldview in which there can be FTL
particles is easy. The difficulty is to construct the worldviews of
all the observers moving slower than light relative to this observer
such that the principle of relativity holds for them.  To this, we
need to ensure that the same possible FTL particles can exist for all
the observers. This can be done, e.g., by allowing all the FTL speeds
including the infinite one to be a possible speed of a particle. For
details, see Section~\ref{sec-thm}.
\section{The language of our axiom system}
\label{sec-lang}

Let us now reconstruct the ideas of Section \ref{sec-inf} within a
framework of formal logic.  To formulate Einstein's original informal
postulates within first-order logic, first we have to fix a set of
basic symbols for the theory, i.e., what objects and relations between
them we will use as basic concepts. 

Here we are going to use a sightly modified framework of
\cite{synthese}.  We will use the following two-sorted language of
first-order logic parametrized by a natural number $d\ge 2$
representing the dimension of spacetime:
\begin{equation}
\{\, \B,\Q\,; \IOb, \Ph,+,\cdot,\le,\W\,\},
\end{equation}
where $\B$ (bodies) and $\Q$ (quantities) are the two sorts, $\IOb$
(inertial observers) is a one-place relation symbol and $\Ph$ (light
signal emitted by) is a two-place relation symbols of sort $\B$, $+$
and $\cdot$ are two-place function symbols of sort $\Q$, $\le$ is a
two-place relation symbol of sort $\Q$, and $\W$ (the worldview
relation) is a $d+2$-place relation symbol the first two arguments of
which are of sort $\B$ and the rest are of sort $\Q$.

Relations $\IOb(m)$ and $\Ph(p,b)$ are translated as ``\textit{$m$ is
  an inertial observer},'' and ``\textit{$p$ is a light signal emitted
  by body $b$},'' respectively. To speak about coordinatization, we
translate $\W(k,b,x_1,x_2,\ldots,x_d)$ as ``\textit{body $k$
  coordinatizes body $b$ at space-time location $\langle x_1,
  x_2,\ldots,x_d\rangle$},'' (i.e., at space location $\langle
x_2,\ldots,x_d\rangle$ and instant $x_1$). 

{\bf Quantity  terms} are the variables of sort $\Q$ and what can be
built from them by using the two-place operations $+$ and $\cdot$,
{\bf body terms} are only the variables of sort $\B$.
$\IOb(m)$, $\Ph(p,b)$, $\W(m,b,x_1,\ldots,x_d)$, $x=y$, and $x\le y$
where $m$, $p$, $b$, $x$, $y$, $x_1$, \ldots, $x_d$ are arbitrary
terms of the respective sorts are so-called {\bf atomic formulas} of
our first-order logic language. The {\bf formulas} are built up from
these atomic formulas by using the logical connectives \textit{not}
($\lnot$), \textit{and} ($\land$), \textit{or} ($\lor$),
\textit{implies} ($\rightarrow$), \textit{if-and-only-if}
($\leftrightarrow$) and the quantifiers \textit{exists} ($\exists$)
and \textit{for all} ($\forall$).

We use the notation $\Q^n$ for the set of all $n$-tuples of elements
of $\Q$. If $\vx\in \Q^n$, we assume that $\vx=\langle
x_1,\ldots,x_n\rangle$, i.e., $x_i$ denotes the
$i$-th component of the $n$-tuple $\vx$. Specially, we write $\W(m,b,\vx)$ in
place of $\W(m,b,x_1,\dots,x_d)$, and we write $\forall \vx$ in place
of $\forall x_1\dots\forall x_d$, etc.

We use first-order logic set theory as a meta theory to speak about model
theoretical terms, such as models, validity, etc.  The {\bf models} of
this language are of the form
\begin{equation}
{\mathfrak{M}} = \langle \B, \Q;
\IOb_\mathfrak{M},\Ph_\mathfrak{M},+_\mathfrak{M},\cdot_\mathfrak{M},\le_\mathfrak{M},\W_\mathfrak{M}\rangle,
\end{equation}
where $\B$ and $\Q$ are nonempty sets, $\IOb_\mathfrak{M}$ is a unary
relation on $\B$, $\Ph_\mathfrak{M}$ is a binary relation on $\B$,
$+_\mathfrak{M}$ and $\cdot_\mathfrak{M}$ are binary operations and
$\le_\mathfrak{M}$ is a binary relation on $\Q$, and $\W_\mathfrak{M}$
is a subset of $\B\times \B\times \Q^d$.  Formulas are interpreted
in $\mathfrak{M}$ in the usual way.  For precise definition of the
syntax and semantics of first-order logic, see, e.g., \cite[\S
  1.3]{CK}, \cite[\S 2.1, \S 2.2]{End}. %, or \cite[pp.39--46]{HMT}.

We denote that formula/statement $\varphi$ is {\bf valid} in model
$\mathfrak{M}$ by $\mathfrak{M}\models\varphi$.  A set of formulas
$\Sigma$ {\bf logically implies} formula $\varphi$, in symbols
$\Sigma\models\varphi$, iff (if and only if)  $\varphi$ is
valid in every model of $\Sigma$.

\section{Axioms for special relativity}
\label{sec-ax}

In this section, we formulate Einsteins original axioms in our
first-order logic language above.  Einstein has assumed two postulates
in his famous 1905 paper \cite{einstein}. The first was the principle
of relativity, which goes back to Galileo, see, e.g., \cite{galileo}
or \cite[pp.176-178]{taylor-wheeler}, and it roughly states that
inertial observers are indistinguishable from each other by
physical experiments, see, e.g., Friedman \cite[\S 5]{friedman}.

Principle of relativity strongly depends on the language in which it
is formalized, see Remark~\ref{rem-SPR}. So to introduce a general
version of the principle of relativity (and not just a kinematic one),
let $\mathcal{L}$ be a many-sorted first-order logic language (of
spacetime theory) containing at least sorts $\B$ and $\Q$ of our
language of Section \ref{sec-lang} and a unary relation $\IOb$ on sort
$\B$.  In this paper, $\mathcal{L}$ will be the language of Section
\ref{sec-lang}, except in the introduction of \ax{SPR_\mathcal{F}} below
and in Proposition~\ref{prop-aut} way below.

Let $\mathcal{F}$ be a set of formulas of $\mathcal{L}$ with at most
one free variable of sort $\B$. This set of formulas $\mathcal{F}$
will play the role of ``laws of physics'' in the formulation of the
principle of relativity theory. The free variable of sort $\B$ is used
to evaluate these formulas on observers and to check whether they are
valid or not according to the observer in question.  We call
$\mathcal{F}$ {\bf set of (potential) laws}.  Now we can formulate a
principle of relativity for each set of laws $\mathcal{F}$ as the
following axiom schema:
\begin{description}
\item[\underline{\ax{SPR_{\mathcal{F}}}}]\label{spr} A potential law of nature
  $\varphi\in \mathcal{F}$ is either true for all the inertial
  observers or false for all of them:
\begin{equation*}
\big\{\, \IOb(m)\land\IOb(k)\rightarrow
\big[\varphi(m,\vx)\leftrightarrow \varphi(k,\vx)\big]
\::\:\varphi\in\mathcal{F}\,\big\}.\footnote{That is, if $m$ and $k$
  are inertial observers, potential law $\varphi$ holds for $m$ with
  parameters $\vx$ if and only if it holds for $k$ with the same
  parameters.}
\end{equation*}
\end{description}
Now for all set of laws $\mathcal{F}$, we have a principle of
relativity.  Let us highlight two important cases. 

We call {\bf strong principle of relativity}, and denote by
\ax{SPR^+}, the one when $\mathcal{F}$ is the set of all formulas of
$\mathcal{L}$ with at most one free variable of sort $\B$, see also
\cite[p.84]{MPhd}. \ax{SPR^+} is implied by existence of
automorphisms of the model between any two inertial observers, see
Proposition~\ref{prop-aut} and Theorem~2.8.20 in \cite{MPhd}.  If
$\mathcal{F}\subseteq\mathcal{G}$, then \ax{SPR_{\mathcal{G}}} is
stronger than \ax{SPR_{\mathcal{F}}}, i.e.,
$\ax{SPR_{\mathcal{G}}}\models\ax{SPR_{\mathcal{F}}}$. Specially,
\ax{SPR^+} is stronger than any other \ax{SPR_{\mathcal{F}}}.

We call {\bf existential principle of relativity}, and denote by
\ax{SPR_\exists}, the one when $\mathcal{F}$ is the set of existential
formulas of our language with at most one free variable of sort
$\B$. The importance of \ax{SPR_\exists} is that the existential
formulas are in some sense the experimentally verifiable statements of
a physical theory. Similarly, universal formulas correspond to the
experimentally refutable statements of the theory. Analogously to
\ax{SPR_\exists}, we could introduce a universal version
\ax{SPR_\forall}, too. However, it is straightforward to show that
\ax{SPR_\forall} and \ax{SPR_\exists} are logically equivalent by
interchanging every universal formula $\varphi(h,\vx) \equiv \forall
\vu\; \psi(h,\vx,\vu)$ of \ax{SPR_\forall} and existential
formula $\varphi^*(h,\vx)\equiv \exists \vu\;
\neg\psi(h,\vx,\vu)$ of \ax{SPR_\exists}.

\begin{rem}\label{rem-SPR}
The richer the language $\mathcal{L}$ we choose to formulate the
principle of relativity the stronger our axiom schema \ax{SPR^+}
is. Therefore, if we add extra basic concepts (e.g., masses of bodies)
to our language it becomes more difficult to prove something (e.g.,
the existence of FTL particles) is consistent with \ax{SPR^+}.
\end{rem}

Let us also note that \ax{SPR_{\mathcal{F}}} contains infinite
statements if $\mathcal{F}$ is infinite. Consequently, \ax{SPR^+},
\ax{SPR_\exists}, and \ax{SPR_\forall} are all infinite lists of
statements.

Einstein's second postulate states that ``Any ray of light moves in
the “stationary” system of co-ordinates with the determined velocity
$c$, whether the ray be emitted by a stationary or by a moving body,''
see \cite{einstein}. We can easily formulate this statement in our
first-order logic frame. To do so, let us introduce the following two
concepts.\footnote{Relations $\timed$ and $\sqspace$ are definable if
  the structure of quantities is rich enough, e.g., if it is an
  ordered field, which will be implied by \ax{AxOField}, see
  p.\pageref{axofield}.}  The {\bf time difference} of coordinate
points $\vx,\vy\in\Q^d$ is defined as:
\begin{equation}
\timed(\vx,\vy)\de x_1-y_1. 
\end{equation}
The {\bf squared spatial distance}\footnote{Because we assume only
  very general assumptions about the quantities, see \ax{AxOField}
  below, to speak about the spatial distance of arbitrary two coordinate
  points, we have to use squared distance since it is possible that
  the distance of two points is not amongst the quantities. For
  example, the distance of points $\langle 0,0\rangle$ and $\langle
  1,1\rangle$ is $\sqrt{2}$. So in the field of rational numbers,
  $\langle 0,0\rangle$ and $\langle 1,1\rangle$ do not have distance
  but they have squared distance.} of $\vx,\vy\in\Q^d$ is defined as:
\begin{equation}
  \sqspace(\vx,\vy)\de (x_2-y_2)^2+\ldots+(x_d-y_d)^2.
\end{equation}

\begin{description}
\item[\underline{\ax{AxLight}}]\label{axlight} There is an inertial observer,
  according to whom, any light signal moves with the same velocity
  $c$, independently of the fact that which body emitted the signal.
  Furthermore, it is possible to send out a light signal in any
  direction (existing according to the coordinate system) everywhere:
\begin{multline*}
\exists mc\Big[\IOb(m)\land c>0\land\forall\vx\vy\,  \Big(\exists pb \big[ \Ph(p,b)\land \W(m,p,\vx)\\ \land
\W(m,p,\vy)\big] \leftrightarrow \sqspace(\vx,\vy)=
c^2\cdot\timed(\vx,\vy)^2\Big)\Big].\footnotemark
\end{multline*}
\end{description}
\footnotetext{That is, there are $m$ and $c$, such that $m$ is an inertial observer, $c$ is a positive quantity, and for all coordinate points $\vx$ and $\vy$ there is a light signal $p$ emitted by body $b$ coordinatized at $\vx$ and $\vy$ by observer $m$ if and only if equation $\sqspace(\vx,\vy)=c^2\cdot\timed(\vx,\vy)^2$ holds.}

Einstein assumed without postulating it explicitly that the structure
of quantities is the field of real numbers.  We make this postulate
more general by assuming only the most important algebraic properties
of real numbers for the quantities. 

\begin{description}\label{axofield}
\item[\underline{\ax{AxOField}}]
 The quantity part $\langle \Q,+,\cdot,\le \rangle$ is an ordered field, i.e.,
\begin{itemize}
\item  $\langle\Q,+,\cdot\rangle$ is a field in the sense of abstract
algebra; and
\item 
the relation $\le$ is a linear ordering on $\Q$ such that  
\begin{itemize}
\item[i)] $x \le y\rightarrow x + z \le y + z$ and 
\item[ii)] $0 \le x \land 0 \le y\rightarrow 0 \le xy$
holds.
\end{itemize}
\end{itemize}
\end{description}

Axiom \ax{AxOField} not only makes our theory more general, but also
opens a new research area investigating which algebraic properties of
numbers are needed by different spacetime theories, see also
\cite{wnst}.  An importance of this research area (as well as using
\ax{AxOField} instead of the field of real numbers) lies in the fact
that we cannot experimentally decide whether the structure of physical
quantities is really isomorphic to the field of real numbers or not.

To axiomatize special relativity based on Einstein's original
postulates, we have to explicitly state one more axiom which was
assumed implicitly by Einstein.  This axiom connects the
worldviews of different inertial observers by saying that all
observers coordinatize  the same ``external" reality (the same set of
events).  By the {\bf event} occurring for observer $m$ at coordinate
point $\vx$, we mean the set of bodies $m$ coordinatizes at $\vx$:
\begin{equation}
\ev_m(\vx)\de\{ b : \W(m,b,\vx)\}.
\end{equation}

\begin{description}\label{axev}
\item[\underline{\ax{AxEv}}]
All inertial observers coordinatize the same set of events:
\begin{equation*}
\IOb(m)\land\IOb(k)\rightarrow \exists \vy\, \forall b
\big[\W(m,b,\vx)\leftrightarrow\W(k,b,\vy)\big].
\end{equation*}
\end{description}
From now on, we will use $\ev_m(\vx)=\ev_k(\vy)$ to abbreviate the
subformula $\forall b\,[\W(m,b,\vx)\leftrightarrow\W(k,b,\vy)]$ of
\ax{AxEv}.

Basically we are ready for formulating an axiom system capturing
Einstein's special theory of  relativity within our framework of
first-order logic. Nevertheless, let us introduce three more
simplifying axioms.

To avoid trivial models, we also assume that there are inertial
observers moving relative to each other.
\begin{description}\label{axthexp}
\item[\underline{\ax{AxThExp}}] Inertial observers can move with any
  speed less than the speed of light:
\begin{multline*}
\exists h\, \IOb(h)\land \forall m \vx\vy \,
\Big(\IOb(m)\land \sqspace(\vx,\vy)<\ls_m^2\cdot\timed(\vx,\vy)^2
\\ \rightarrow \exists k \big[\IOb(k)\land
\W(m,k,\vx)\land\W(m,k,\vy)\big]\Big).
\end{multline*}
\end{description}

\begin{description}\label{axself}
\item[\underline{\ax{AxSelf}}]
Any inertial observer is stationary relative to himself:
\begin{equation*}
\IOb(m)\rightarrow \forall \vx\big[\W(m,m,\vx) \leftrightarrow
x_2=\ldots=x_d=0\big].
\end{equation*}
\end{description}

Axiom \ax{AxSelf} makes it easier to speak about the motion of
inertial observers since it identifies the observers with their
time-axises. So instead of always referring to the time-axises of
inertial observers we can speak about their motion directly.

Our last axiom is a symmetry axiom saying that
observers use the same units of measurement.
\begin{description}\label{axsymd}
\item[\underline{\ax{AxSymD}}]
Any two inertial observers agree as to the spatial distance between
two events if these two events are simultaneous for both of them; and
the speed of light is 1 for all observers:
\begin{multline*}
\IOb(m)\land\IOb(k) \land x_1=y_1\land
x'_1=y'_1\land \ev_m(\vx)=\ev_k(\vx')\\ \land
\ev_m(\vy)=\ev_k(\vy')\rightarrow \sqspace(\vx,\vy)=\sqspace(\vx',\vy'),
\text{ and }\\
\IOb(m)\rightarrow\exists
pb \big[\Ph(p,b)\land\W(m,p,0,\ldots,0)\land\W(m,p,1,1,0,\ldots,0)\big].
\end{multline*}
\end{description}

Axiom \ax{AxSymD} makes life easier (it simplifies the formulation of
our theorems) because we do not have to consider situations such as
when one observer measures distances in meters while another observer
measures them in feet.

Let us now introduce an axiom system of special relativity as the
collection of the axioms above:
\begin{equation*}\label{SR}
\ax{SR} \de \ax{SPR^+}+\ax{AxLight}+\ax{AxOField}+
\ax{AxEv}+\ax{AxThExp}+\ax{AxSelf}+ \ax{AxSymD}.
\end{equation*}

In our axiomatic approach, we usually use the following version of the
light axiom, which follows from 
\ax{SPR^+} and \ax{AxLight},
see Proposition~\ref{prop-axph}:
\begin{description}\label{axph}
\item[\underline{\ax{AxPh}}] For any inertial observer, the speed of
  light is the same everywhere and in every direction (and it is
  finite). Furthermore, it is possible to send out a light signal in
  any direction (existing according to the coordinate system)
  everywhere:
\begin{multline*}
\IOb(m)\rightarrow \exists c_m \Big[c_m>0\land \forall \vx\vy \, 
\Big(\exists pb \big[ \Ph(p,b)\land \W(m,p,\vx)\\\land \W(m,p,\vy)\big]
\leftrightarrow \sqspace(\vx,\vy)= c_m^2\cdot\timed(\vx,\vy)^2\Big)\Big].
\end{multline*}
\end{description}

Let us note here that \ax{AxPh} does not require (by itself) that the
speed of light is the same for every inertial observer.  It requires
only that the speed of light according to a fixed inertial observer is
a positive quantity which does not depend on the direction or the
location.

By \ax{AxPh}, we can define the {\bf speed of light} according to inertial
observer $m$ as the following binary relation:
\begin{multline*}
\ls(m,v)\defiff v>0  \land \forall \vx\vy\,\Big(
\exists pb \big[\Ph(p,b)\land \W(m,p,\vx)\\\land \W(m,p,\vy)\big]
\rightarrow \sqspace(\vx,\vy)= v^2\cdot\timed(\vx,\vy)^2\Big). 
\end{multline*}

By \ax{AxPh}, there is one and only one speed $v$ for every inertial observer
$m$ such that $\ls(m,v)$ holds. From now on, we will denote this unique speed by $\ls_m$.

Let us now prove that Einstein's light axiom formulated as
\ax{AxLight} and the principle of relativity \ax{SPR_{\mathcal{F}}} implies \ax{AxPh} if the
set of laws $\mathcal{F}$ contain a certain existential formula.

\begin{prop}\label{prop-axph}
Let $\mathcal{F}$ be set of laws containing formula $\exists
pb\, [\Ph(p,b)\land \W(h,p,\vx)\land\W(h,p,\vy)]$. Then
\begin{equation*}
\ax{SPR_{\mathcal{F}}}+ \ax{AxLight} + \ax{AxOField}\models \ax{AxPh}\land  \forall mk
\big[\IOb(m)\land\IOb(k)\rightarrow \ls_m=\ls_k\big].
\end{equation*}
\end{prop}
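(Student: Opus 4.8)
The plan is to use the principle of relativity to transport the full light-cone equation from the single distinguished observer provided by \ax{AxLight} to every inertial observer, and then to read off both \ax{AxPh} and the equality of light speeds.

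First I would apply \ax{AxLight} to fix an inertial observer $m_0$ together with a positive quantity $c$ such that, for all $\vx,\vy$,
\[
\exists pb\,[\Ph(p,b)\land\W(m_0,p,\vx)\land\W(m_0,p,\vy)]\ \leftrightarrow\ \sqspace(\vx,\vy)=c^2\cdot\timed(\vx,\vy)^2.
\]
The key observation is that the left-hand side is precisely the existential law $\psi(h,\vx,\vy)\equiv\exists pb\,[\Ph(p,b)\land\W(h,p,\vx)\land\W(h,p,\vy)]$ which, by hypothesis, belongs to $\mathcal{F}$; here the distinguished body variable is $h$ and the tuple of quantity parameters occurring in the \ax{SPR_{\mathcal{F}}} schema is the concatenation $\langle\vx,\vy\rangle$. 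Hence the instance of \ax{SPR_{\mathcal{F}}} associated with $\psi$ gives, for any inertial observers $m$ and $m_0$ and all $\vx,\vy$,
\[
\psi(m,\vx,\vy)\ \leftrightarrow\ \psi(m_0,\vx,\vy).
\]

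Chaining these two equivalences yields, for an arbitrary inertial observer $m$ and all $\vx,\vy$,
\[
\exists pb\,[\Ph(p,b)\land\W(m,p,\vx)\land\W(m,p,\vy)]\ \leftrightarrow\ \sqspace(\vx,\vy)=c^2\cdot\timed(\vx,\vy)^2,
\]
which is exactly the matrix of \ax{AxPh} witnessed by the single value $c_m\de c>0$; this establishes \ax{AxPh}. Note that the right-to-left direction of this biconditional is what supplies a light signal in every coordinate direction everywhere, so the ``furthermore'' clause of \ax{AxPh} comes for free.

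Finally, since the displayed biconditional holds with one and the same $c$ for every inertial observer, $c$ satisfies the defining (forward) implication of $\ls(m,c)$ for each $m$; by the uniqueness of the light speed already guaranteed by \ax{AxPh}, this forces $\ls_m=c$ for every $m$, and therefore $\ls_m=\ls_k$ for all inertial observers $m,k$. I expect the genuinely delicate point to be the second step: one must apply \ax{SPR_{\mathcal{F}}} to the existential formula $\psi$ correctly, keeping $h$ as the observer slot and $\langle\vx,\vy\rangle$ as free quantity parameters, so that the two observers are compelled to agree on the entire light-cone relation and not merely on individual incidences. Everything else is routine: checking that $c$ is a bona fide positive witness and invoking the already-established uniqueness of $\ls_m$.
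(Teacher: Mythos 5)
Your proposal is correct and follows essentially the same route as the paper's own proof: both invoke $\ax{SPR_{\mathcal{F}}}$ on the existential light-signal formula (which is in $\mathcal{F}$ by hypothesis) to transfer the biconditional supplied by $\ax{AxLight}$ from the distinguished observer to every inertial observer, and then read off $\ax{AxPh}$ with the common witness $c$ together with $\ls_m=\ls_k$. The only difference is the order in which the two axioms are applied, which is immaterial.
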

\begin{proof}
By \ax{SPR_{\mathcal{F}}}, we get that 
\begin{multline*}
\IOb(m)\land\IOb(k)\rightarrow\Big(
\exists pb \big[ \Ph(p,b)\land \W(m,p,\vx)\land\W(m
,p,\vy)\big]\\ \leftrightarrow \exists pb \big[ \Ph(p,b)\land
\W(k,p,\vx)\land\W(k,p,\vy)\big]\Big).
\end{multline*}
By axiom \ax{AxLight}, there are an inertial observer $m$ and a
positive quantity $c$ such that
\begin{multline*}
\exists pb \big[ \Ph(p,b)\land \W(m,p,\vx)\land \W(m,p,\vy)\big]
\\
\leftrightarrow \sqspace(\vx,\vy)= c^2\cdot\timed(\vx,\vy)^2 
\end{multline*}
Therefore, for all inertial observer $m$ exists a light signal moving
through $\vx$ and $\vy$ iff $\sqspace(\vx,\vy)=
c^2\cdot\timed(\vx,\vy)^2$, i.e., formula
\begin{multline*}
\exists c \Big[ c>0 \land \forall m\vx\vy \,\Big( \IOb(m)\rightarrow 
\exists pb \big[ \Ph(p,b)\land \W(m,p,\vx)\\\land
\W(m,p,\vy)\big]  \leftrightarrow \sqspace(\vx,\vy)=
c^2\cdot\timed(\vx,\vy)^2\Big)\Big]
\end{multline*}
follows from \ax{SPR_{\mathcal{F}}} and \ax{AxLight}. This formula
implies both axiom \ax{AxPh} and formula $\forall mk\,[
\IOb(m)\land\IOb(k)\rightarrow \ls_m=\ls_k]$.
\end{proof}

Since the sets of laws corresponding to \ax{SPR^+} and
\ax{SPR_\exists} contains formula $\exists pb\,[ \Ph(p,b)\land
\W(m,p,\vx)\land\W(m,p,\vy)]$, Proposition~\ref{prop-axph} implies the following:
\begin{cor}\label{cor-axph}
\begin{align*}
\ax{SPR^+}+ \ax{AxLight} + \ax{AxOField}&\models \ax{AxPh} \land \big[
\IOb(m)\land\IOb(k)\rightarrow \ls_m=\ls_k\big],\\
\ax{SPR_\exists}\,+ \ax{AxLight} + \ax{AxOField}&\models \ax{AxPh} \land \big[
\IOb(m)\land\IOb(k)\rightarrow \ls_m=\ls_k\big].
\end{align*}
\end{cor}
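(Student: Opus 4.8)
The plan is to obtain both implications as direct instances of Proposition~\ref{prop-axph}. Recall that \ax{SPR^+} is the instance of \ax{SPR_{\mathcal{F}}} in which $\mathcal{F}$ is the set of \emph{all} formulas with at most one free variable of sort $\B$, while \ax{SPR_\exists} is the instance in which $\mathcal{F}$ is the set of all \emph{existential} such formulas. Hence it suffices to verify that the hypothesis of Proposition~\ref{prop-axph} is satisfied for each of these two choices of $\mathcal{F}$, i.e., that the formula
\[
\exists pb\,[\Ph(p,b)\land \W(h,p,\vx)\land\W(h,p,\vy)]
\]
belongs to both sets of laws.

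First I would inspect the syntactic shape of this formula. Its only quantifiers are the leading existential ones binding the body variables $p$ and $b$, and its matrix is a conjunction of atomic formulas; therefore the formula is existential. Its free variables are $h$ (of sort $\B$) together with the quantity variables coming from $\vx$ and $\vy$; in particular it has exactly one free variable of sort $\B$, namely $h$. Consequently the formula lies in the set of existential formulas with at most one free body variable, which is precisely the set of laws defining \ax{SPR_\exists}, and \emph{a fortiori} in the set of all formulas with at most one free body variable, which is the set of laws defining \ax{SPR^+}.

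Once this membership is checked, both conclusions follow immediately: applying Proposition~\ref{prop-axph} with $\mathcal{F}$ taken as the set of laws of \ax{SPR^+} yields the first line, and applying it with $\mathcal{F}$ taken as the set of laws of \ax{SPR_\exists} yields the second, since in each case the conclusion of the proposition is verbatim the right-hand side of the corollary. There is essentially no obstacle here beyond the verification above; the only point requiring care is that Proposition~\ref{prop-axph} displays the distinguished free body variable as $h$, whereas the formula listed alongside \ax{SPR^+} and \ax{SPR_\exists} uses $m$. This discrepancy is a harmless renaming of the distinguished free variable of an element of $\mathcal{F}$ and does not affect membership, so the corollary follows.
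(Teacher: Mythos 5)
Your proposal is correct and follows exactly the route the paper takes: it observes that the formula $\exists pb\,[\Ph(p,b)\land \W(h,p,\vx)\land\W(h,p,\vy)]$ is an existential formula with a single free body variable, hence belongs to the sets of laws defining both \ax{SPR_\exists} and \ax{SPR^+}, and then invokes Proposition~\ref{prop-axph}. Your additional remarks on the syntactic shape of the formula and the harmless renaming of $h$ versus $m$ only make explicit what the paper leaves implicit.
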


We usually assume \ax{AxPh} instead of \ax{SPR^+} and \ax{AxLight}
because, by Corollary~\ref{cor-axph}, \ax{AxPh} is a more general
basic assumption, but it is strong enough to capture the kinematics of
special relativity, see, e.g., Theorem~\ref{thm-poi}. Therefore, let
us also recall the finite axiom system for special relativity used in
\cite{synthese}:
\begin{equation*}
\ax{SpecRel} \de \ax{AxPh}+\ax{AxOField}+
\ax{AxEv}+\ax{AxSelf}+ \ax{AxSymD}.\footnote{Because, models in which there are no inertial observers moving relative to each other are trivial but do not ruin the key theorems of special relativity, we do not include \ax{AxThExp} in \ax{SpecRel}.}
\end{equation*}

It is easy to see that there are great many models of \ax{SpecRel} which
are not models of \ax{SR}. Therefore, our theory \ax{SpceRel} is more
general than \ax{SR} by Corollary~\ref{cor-axph}, i.e., the following
is true:
\begin{cor}\label{cor-sr}
\begin{equation*}
\ax{SR}\models \ax{SpecRel}\enskip\text{ and }\enskip
\ax{SpecRel}\not\models\ax{SR}.
\end{equation*}
\end{cor}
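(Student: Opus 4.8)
The plan is to treat the two conjuncts separately, recalling that \ax{SpecRel} is obtained from \ax{SR} by replacing the pair \ax{SPR^+}, \ax{AxLight} by the single axiom \ax{AxPh}, while the four axioms \ax{AxOField}, \ax{AxEv}, \ax{AxSelf}, \ax{AxSymD} are common to both theories. With this in hand the first half is essentially bookkeeping on top of Corollary~\ref{cor-axph}, and the second half amounts to producing one separating model.

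For the implication $\ax{SR}\models\ax{SpecRel}$, I would first note that \ax{SR} contains \ax{AxOField}, \ax{AxEv}, \ax{AxSelf} and \ax{AxSymD} as explicit conjuncts, so these axioms of \ax{SpecRel} hold in every model of \ax{SR} trivially. The only remaining axiom of \ax{SpecRel} is \ax{AxPh}, and since \ax{SR} contains both \ax{SPR^+} and \ax{AxLight}, Corollary~\ref{cor-axph} yields $\ax{SR}\models\ax{AxPh}$ at once. Hence every axiom of \ax{SpecRel} follows from \ax{SR}, which gives the first conjunct.

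For the non-implication $\ax{SpecRel}\not\models\ax{SR}$, the strategy is to exhibit a single model of \ax{SpecRel} in which \ax{SR} fails. I would take a model $\mathfrak{M}$ whose quantity part $\langle\Q,+,\cdot,\le\rangle$ is any ordered field (for definiteness, the field of real numbers) and in which there are no inertial observers at all, i.e.\ $\IOb_\mathfrak{M}=\emptyset$. Since \ax{AxPh}, \ax{AxEv}, \ax{AxSelf} and \ax{AxSymD} each carry $\IOb(m)$, respectively $\IOb(m)\land\IOb(k)$, in their antecedent, they are all satisfied vacuously in $\mathfrak{M}$, and \ax{AxOField} holds by the choice of $\Q$; thus $\mathfrak{M}\models\ax{SpecRel}$. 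By contrast, \ax{AxLight} asserts $\exists m\,[\IOb(m)\land\ldots]$, which cannot hold when $\IOb_\mathfrak{M}=\emptyset$, so $\mathfrak{M}\not\models\ax{AxLight}$ and therefore $\mathfrak{M}\not\models\ax{SR}$. This witnesses $\ax{SpecRel}\not\models\ax{SR}$.

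The only point that needs checking is that the empty-observer model genuinely satisfies all of \ax{SpecRel}, and this reduces to verifying that none of its axioms asserts the existence of an inertial observer — which is indeed the case, as each is either universally quantified or guarded by an $\IOb$ antecedent. I expect no real obstacle here. It is worth remarking that \ax{SPR^+} is likewise satisfied vacuously in $\mathfrak{M}$, so it is precisely the existential content of \ax{AxLight} — the demand that some suitable inertial observer actually exist — that separates \ax{SR} from the more general \ax{SpecRel}.
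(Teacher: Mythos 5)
Your argument is correct, and for the first conjunct it is exactly the paper's: the four axioms \ax{AxOField}, \ax{AxEv}, \ax{AxSelf}, \ax{AxSymD} are common to both theories, and $\ax{SR}\models\ax{AxPh}$ is precisely Corollary~\ref{cor-axph}, which is the only justification the paper itself invokes. For the second conjunct the paper only asserts that ``there are great many models of \ax{SpecRel} which are not models of \ax{SR}'' without exhibiting one, so your explicit countermodel is a genuine addition, and it works: every axiom of \ax{SpecRel} is either guarded by an $\IOb$ antecedent or (in the case of \ax{AxOField}) concerns only the quantity sort, while \ax{AxLight} has irreducible existential import, so it fails when $\IOb_\mathfrak{M}=\emptyset$ (just make sure $\B$ itself is nonempty, as the paper's definition of a model requires). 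Two remarks. First, \ax{SpecRel} is never formally defined in the paper; your reading of it as $\ax{AxPh}+\ax{AxOField}+\ax{AxEv}+\ax{AxSelf}+\ax{AxSymD}$ is the standard one and the one the surrounding text presupposes, but the corollary is only meaningful modulo that convention, so it is worth stating. Second, your model separates the theories purely through the existential content of \ax{AxLight} rather than through the strength of \ax{SPR^+}; a separation closer to what the author presumably intends would keep inertial observers and instead violate \ax{SPR^+}, e.g.\ by modifying the emission relation so that $\exists p\,\Ph(p,h)$ holds for one inertial observer but not another --- this leaves \ax{AxPh} and even \ax{AxLight} intact while killing the strong principle of relativity. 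Either witness establishes the non-implication, so your proof stands.
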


In relativity theory, we are often interested in comparing the
worldviews of two different observers. To characterize the possible
relations between the worldviews of inertial observers, let us
introduce the {\bf worldview transformation} between observers $m$ and
$k$ (in symbols, $\w_{mk}$) as the binary relation on $\Q^d$
connecting the coordinate points where $m$ and $k$ coordinatize the
same events:
\begin{equation}\label{eq-ww}
\w_{mk}(\vx,\vy)\defiff \forall b\big[\W(m,b,\vx)\leftrightarrow \W(k,b,\vy)\big].
\end{equation}

Map $P:\Q^d\rightarrow\Q^d$ is called a {\bf Poincar\'e transformation} iff
it is an affine bijection having the following property
\begin{equation}
\timed(\vx,\vy)^2-\sqspace(\vx,\vy)=\timed(\vx',\vy')^2-\sqspace(\vx',\vy')
\end{equation}
for all $\vx,\vy,\vx',\vy'\in\Q^d$ for which  $P(\vx)=\vx'$ and $P(\vy)=\vy'$.

Theorem~\ref{thm-poi} shows that even our (more general) axiom system
\ax{SpecRel} perfectly captures the kinematics of special relativity
since it implies that the worldview transformations between inertial
observers are the same as in the standard non-axiomatic approaches.
\begin{thm}\label{thm-poi}
Let $d\ge3$. Assume \ax{SpecRel}. Then $\w_{mk}$ is a Poincar\'e
transformation if $m$ and $k$ are inertial observers.
\end{thm}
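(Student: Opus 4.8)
The plan is to reduce the statement to an algebraic version of the Alexandrov--Zeeman theorem. Once we know that $\w_{mk}$ is a bijection of $\Q^d$ carrying light-cones to light-cones, the hypothesis $d\ge3$ will force it to be an affine map that preserves the Minkowski form $\timed(\vx,\vy)^2-\sqspace(\vx,\vy)$ up to a positive scalar, and \ax{AxSymD} will then pin that scalar down to $1$. Accordingly I would split the argument into four stages: (i) $\w_{mk}$ is a bijection; (ii) it preserves lightlike separation in both directions; (iii) it is consequently affine and conformal for the Minkowski form; and (iv) its conformal factor equals $1$.

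For stage (i), totality and surjectivity of $\w_{mk}$ are immediate from \ax{AxEv}: every coordinate point $\vx$ of $m$ carries an event which, by \ax{AxEv}, is also carried at some point of $k$, and symmetrically for the inverse. The substantive point is single-valuedness and injectivity, i.e.\ that distinct coordinate points carry distinct events. Here I would use \ax{AxPh}: through any point photons can be emitted in every spatial direction, so given $\vx\ne\vz$ there is a photon-line through $\vx$ avoiding $\vz$; the corresponding photon then lies in $\ev_m(\vx)$ but not in $\ev_m(\vz)$. Hence $\vx\mapsto\ev_m(\vx)$ is injective and $\w_{mk}$ is a bijection. For stage (ii), \ax{AxPh} together with the second clause of \ax{AxSymD} fixes $\ls_m=\ls_k=1$, so a photon is coordinatised by $m$ exactly along the lightlike pairs $\sqspace(\vx,\vy)=\timed(\vx,\vy)^2$, and likewise for $k$. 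Since a photon belongs to each event on its worldline and $\w_{mk}$ identifies the events of $m$ with those of $k$, it must send $m$-lightlike pairs to $k$-lightlike pairs; applying the same reasoning to $\w_{mk}^{-1}=\w_{km}$ gives preservation in both directions.

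Stages (iii)--(iv) carry the real weight, and stage (iii) is the main obstacle. From lightlike-preservation one first shows that $\w_{mk}$ maps light-lines to light-lines, hence (using $d\ge3$) lines to lines, and then that it is affine; that it is moreover conformal for $\timed^2-\sqspace$ is the content of the Alexandrov--Zeeman theorem. The delicate feature is that $\Q$ is merely an ordered field, so the classical topological proofs are unavailable and one must argue purely algebraically; it is exactly here that $d\ge3$ is indispensable, since for $d=2$ the light-cone-preserving maps form a far larger, non-affine class, and one must also rule out a nontrivial field-automorphism twist in the induced collineation. Granting this, $\w_{mk}$ takes the form $\vx\mapsto\lambda L\vx+a$ with $L$ preserving $\timed^2-\sqspace$ and $\lambda>0$. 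For stage (iv) I would invoke the first clause of \ax{AxSymD}: for $d\ge3$ one can choose events simultaneous for both $m$ and $k$ lying at positive distance (the intersection of the two simultaneity slices is at least one-dimensional), and the required agreement of their spatial distances forces $\lambda=1$. Thus $\w_{mk}$ preserves $\timed(\vx,\vy)^2-\sqspace(\vx,\vy)$ exactly and is a Poincar\'e transformation.
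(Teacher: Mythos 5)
The paper itself gives no proof of Theorem~\ref{thm-poi}: it is quoted from the literature (``For the proof of Theorem~\ref{thm-poi}, see \cite{wnst}''), so there is no in-text argument to compare yours against. That said, your four-stage outline is the standard and correct strategy, and it is essentially the route taken in the cited source: show $\w_{mk}$ is a bijection preserving lightlike separation in both directions, apply an ordered-field version of the Alexandrov--Zeeman theorem to get affinity and conformality, then use the first clause of \ax{AxSymD} to normalize the conformal factor to $1$. Two remarks. First, in stage~(i) the phrase ``a photon-line through $\vx$ avoiding $\vz$'' does not quite do the job: \ax{AxPh} does not assert that a photon's worldline is contained in a line, only that any two coordinate points it occupies are lightlike separated. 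The correct move is to pick $\vy$ lightlike separated from $\vx$ but not from $\vz$ (possible whenever $\vx\ne\vz$, since the light cones of distinct points differ); the photon witnessing the pair $\vx,\vy$ then cannot belong to $\ev_m(\vz)$, because otherwise the left-to-right direction of the biconditional in \ax{AxPh} would force $\vz$ and $\vy$ to be lightlike separated. Second, and more importantly, stage~(iii) is where the entire mathematical content of the theorem sits, and you invoke it (``granting this'') rather than prove it. Over an arbitrary ordered field a light-cone-preserving bijection is a priori only semilinear, i.e.\ affine up to a coordinatewise field automorphism $\sigma$, and such a $\sigma$ automatically preserves the \emph{vanishing} of $\timed^2-\sqspace$ without preserving its value; eliminating this twist and establishing affinity purely algebraically (no continuity is available) is precisely the nontrivial content for which the paper cites \cite{wnst}. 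So your proposal is a faithful and correctly organized reduction of the theorem to the algebraic Alexandrov--Zeeman theorem --- at essentially the same level of completeness as the paper's own citation --- but it is not a self-contained proof.
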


For the proof of Theorem~\ref{thm-poi}, see \cite{wnst}. For versions
of Theorem~\ref{thm-poi} using a similar but different axioms systems
of special relativity, see, e.g., \cite{pezsgo}, \cite{AMNSamples},
\cite{logst}.

\begin{cor}\label{cor-poi}
Let $d\ge3$. Assume \ax{SR}. Then $\w_{mk}$ is a Poincar\'e
transformation if $m$ and $k$ are inertial observers. \qed
\end{cor}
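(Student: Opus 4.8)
The plan is to obtain this corollary as an immediate logical consequence of Theorem~\ref{thm-poi} together with the entailment already recorded in Corollary~\ref{cor-sr}; no fresh geometric or algebraic work is needed, since all of the substantive content has been isolated in those two earlier results. Accordingly, I would argue purely at the level of models, using the definition of logical implication fixed in Section~\ref{sec-lang}.

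First I would invoke Corollary~\ref{cor-sr}, which asserts $\ax{SR}\models\ax{SpecRel}$. By definition this means precisely that every model $\mathfrak{M}$ of \ax{SR} is also a model of \ax{SpecRel}. So I would fix an arbitrary model $\mathfrak{M}\models\ax{SR}$ and note that then $\mathfrak{M}\models\ax{SpecRel}$ as well.

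Next, under the standing hypothesis $d\ge3$, I would apply Theorem~\ref{thm-poi} to this $\mathfrak{M}$. Since $\mathfrak{M}\models\ax{SpecRel}$, the theorem guarantees that whenever $m$ and $k$ are inertial observers in $\mathfrak{M}$, the worldview transformation $\w_{mk}$ is a Poincar\'e transformation. As $\mathfrak{M}$ was an arbitrary model of \ax{SR}, this is exactly the assertion of the corollary, and the argument is finished.

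There is genuinely no hard step here: the only things to check are that the dimension hypothesis $d\ge3$ carries over unchanged—which it does, being a parameter of the language rather than of a particular axiom system—and that the entailment $\ax{SR}\models\ax{SpecRel}$ is used in the correct direction, namely that models of the stronger theory \ax{SR} are models of the weaker theory \ax{SpecRel} and not conversely. The real work, deducing the Poincar\'e form of $\w_{mk}$ from the kinematic axioms, lives in Theorem~\ref{thm-poi} (proved in \cite{wnst}) and, upstream of Corollary~\ref{cor-sr}, in the derivation of \ax{AxPh} from \ax{SPR^+} and \ax{AxLight} carried out in Proposition~\ref{prop-axph}.
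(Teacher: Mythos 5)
Your proposal is correct and is exactly the argument the paper intends: the corollary is stated with an immediate \qed because it follows from Theorem~\ref{thm-poi} once one notes, via Corollary~\ref{cor-sr}, that every model of \ax{SR} is a model of \ax{SpecRel}. You also correctly identify that the entailment must be used in the direction from the stronger theory to the weaker one, so nothing is missing.
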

\noindent
The  {\bf worldline} of body $b$ according to observer $m$ is
defined as:
\begin{equation}
\wl_m(b)\de\{ \vx: \W(m,b,\vx)\}.
\end{equation}
\begin{cor}\label{cor-line}
Let $d\ge3$. Assume \ax{SpecRel} or \ax{SR}. The $\wl_m(k)$ is a
straight line if $m$ and $k$ are inertial observers.
\end{cor}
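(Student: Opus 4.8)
The plan is to reduce the claim to the fact that an inertial observer's worldline according to himself is the time-axis, and then to transport this line into $m$'s worldview along the worldview transformation $\w_{mk}$, which is affine by Theorem~\ref{thm-poi} (under \ax{SpecRel}) or Corollary~\ref{cor-poi} (under \ax{SR}). Since affine bijections carry straight lines to straight lines, this will give the result. First I would record that, by \ax{AxSelf} (an axiom of both \ax{SpecRel} and \ax{SR}), the worldline of $k$ according to $k$ is the time-axis:
\begin{equation*}
\wl_k(k)=\{\vx\in\Q^d : x_2=\ldots=x_d=0\}=\taxis,
\end{equation*}
which is a straight line.

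Next I would relate $\wl_m(k)$ to $\wl_k(k)$ through the worldview transformation. By Theorem~\ref{thm-poi} (respectively Corollary~\ref{cor-poi}), $\w_{mk}$ is a Poincar\'e transformation, hence a total affine bijection $P\colon\Q^d\to\Q^d$; write $\vy=P(\vx)$ whenever $\w_{mk}(\vx,\vy)$. Instantiating the defining biconditional \eqref{eq-ww} of $\w_{mk}$ at the body $k$ gives $\W(m,k,\vx)\leftrightarrow\W(k,k,\vy)$. Reading this off yields, for every $\vx$,
\begin{equation*}
\vx\in\wl_m(k)\iff \W(m,k,\vx)\iff \W(k,k,P(\vx))\iff P(\vx)\in\taxis,
\end{equation*}
so that $\wl_m(k)=P^{-1}[\taxis]$.

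Finally, since $P$ is an affine bijection, so is its inverse, and the image of the straight line $\taxis$ under an affine bijection is again a straight line; hence $\wl_m(k)$ is a straight line. The only point needing care --- and the one that genuinely uses the hypothesis $d\ge 3$ --- is that $\w_{mk}$ is a bona fide total function (indeed an affine bijection) and not merely a relation; this is exactly the content of Theorem~\ref{thm-poi}, so once that theorem is invoked the remaining steps are routine.
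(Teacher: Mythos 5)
Your argument is correct and is exactly the intended derivation behind the paper's (unproved) corollary: combine \ax{AxSelf}, which makes $\wl_k(k)$ the time-axis, with Theorem~\ref{thm-poi} (resp.\ Corollary~\ref{cor-poi}), which makes $\w_{mk}$ a total affine bijection, so that $\wl_m(k)=\w_{mk}^{-1}[\taxis]$ is a straight line. You also correctly locate where $d\ge 3$ enters, namely only through the invocation of Theorem~\ref{thm-poi}.
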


\section{Independence of FTL bodies of SR}
\label{sec-thm}
Before we show that the existence of FTL objects (bodies) is
independent of \ax{SR}, let us first prove a useful connection
between \ax{SPR^+} and the automorphism group of the models of our
language. 

\begin{prop}\label{prop-aut}
Let $\mathcal{L}$ be a language on which \ax{SPR^+} is formalized. Let
$\mathfrak{M}$ be a model of language $\mathcal{L}$. If for all
inertial observers $m$ and $k$, there is an automorphism of
$\mathfrak{M}$ fixing the quantities and taking $m$ to $k$, then
\ax{SPR^+} is valid in $\mathfrak{M}$.
\end{prop}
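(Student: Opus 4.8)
The plan is to reduce the statement to the standard model-theoretic fact that automorphisms preserve the truth of first-order formulas, and then to exploit the two special features of the automorphisms furnished by the hypothesis: that they fix the quantity sort pointwise and that they carry a prescribed inertial observer to another one. First I would unfold what $\ax{SPR^+}$ demands of $\mathfrak{M}$. By definition $\ax{SPR^+}$ is the schema $\ax{SPR_{\mathcal{F}}}$ taken for $\mathcal{F}$ the set of \emph{all} formulas of $\mathcal{L}$ having at most one free variable of sort $\B$. Hence, to establish $\mathfrak{M}\models\ax{SPR^+}$, it is enough to fix an arbitrary such formula $\varphi(h,\vx)$ (with $h$ its single free $\B$-variable and $\vx$ its tuple of free $\Q$-variables), fix arbitrary $m,k\in\B$ with $\mathfrak{M}\models\IOb(m)\land\IOb(k)$ together with an arbitrary tuple $\vx$ of quantities, and then verify $\mathfrak{M}\models\varphi(m,\vx)\leftrightarrow\varphi(k,\vx)$.

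Next I would invoke the preservation lemma: if $\theta$ is an automorphism of $\mathfrak{M}$, then for every formula $\psi$ and every tuple $\vz$ of elements of the appropriate sorts, $\mathfrak{M}\models\psi(\vz)$ holds iff $\mathfrak{M}\models\psi(\theta\vz)$ holds. This is proved by a routine induction on the complexity of $\psi$, and it goes through verbatim in the many-sorted setting, where $\theta$ is a sort-respecting system of bijections commuting with all the basic relations and operations of $\mathcal{L}$. Applying it to $\psi=\varphi$ gives, for \emph{any} automorphism $\theta$ of $\mathfrak{M}$, that $\mathfrak{M}\models\varphi(m,\vx)$ iff $\mathfrak{M}\models\varphi(\theta m,\theta\vx)$.

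Finally I would feed in the hypothesis. Given the fixed inertial observers $m,k$, choose the automorphism $\theta$ that fixes the quantities and satisfies $\theta m=k$. Since $\theta$ is the identity on $\Q$, we have $\theta\vx=\vx$; since $\theta m=k$, the right-hand side collapses to $\varphi(k,\vx)$. Thus $\mathfrak{M}\models\varphi(m,\vx)$ iff $\mathfrak{M}\models\varphi(k,\vx)$, which is exactly the required instance of $\ax{SPR^+}$ for this $\varphi$, $m$, $k$, and $\vx$; as these were arbitrary, $\mathfrak{M}\models\ax{SPR^+}$. I do not expect a genuine obstacle here, as the whole content is the preservation lemma, which is standard. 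The single point deserving care is the bookkeeping of the extra free $\Q$-variables $\vx$: they must be transported through $\theta$, and it is precisely the assumption that $\theta$ fixes $\Q$ pointwise that leaves them unchanged, so that the biconditional obtained relates the values of $\varphi$ at $m$ and at $k$ for the \emph{same} $\vx$, rather than for $\vx$ and some moved tuple $\theta\vx$.
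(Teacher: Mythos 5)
Your proof is correct and follows essentially the same route as the paper's: both rest on the standard fact that automorphisms preserve satisfaction of first-order formulas, applied to the hypothesized automorphism that fixes $\Q$ pointwise and sends $m$ to $k$. The only cosmetic difference is that you invoke the preservation lemma in its biconditional form, while the paper proves one implication and obtains the converse by interchanging $m$ and $k$.
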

\begin{proof}
Let $\varphi(h,\vx)$ be an arbitrary formula with only free variables
$h$ of sort $\B$ and $\vx$ of sort $\Q$. To prove \ax{SPR^+}, we have
to prove that
\begin{equation}\label{eq-spr}
\mathfrak{M}\models\forall mk\vx\,\Big( \IOb(m) \land
\IOb(k)\rightarrow
\big[\varphi(m,\vx)\leftrightarrow\varphi(k,\vx)\big]\Big).
\end{equation}
Let $m$ and $k$ be inertial observers and let $Aut_{mk}$ be the
automorphism taking $m$ to $k$ and fixing the quantities. This
automorphism exists by our assumption. Let $\vx$ be such that
$\mathfrak{M}\models\varphi(m,\vx)$. Then
$\mathfrak{M}\models\varphi\big(Aut_{mk}(m),Aut_{mk}(\vx)\big)$ since
$Aut_{mk}$ is an automorphism of $\mathfrak{M}$. Since $Aut_{mk}(m)=k$
and $Aut_{mk}(\vx)=\vx$, we have that
$\mathfrak{M}\models\varphi(k,\vx)$. Therefore, the following half of
\eqref{eq-spr} holds
\begin{equation}
\mathfrak{M}\models \forall mk\vx\,\Big(
\IOb(m)\land\IOb(k)\rightarrow
\big[\varphi(m,\vx)\rightarrow\varphi(k,\vx)\big]\Big)
\end{equation}
since $m$ and $k$ were arbitrary inertial observers and $\vx$ was an
arbitrary sequence of quantities. By interchanging $m$ and $k$, we get
the converse direction. So $\mathfrak{M}\models\ax{SPR^+}$; and this is
what we wanted to prove.
\end{proof}
For model theoretic characterizations of \ax{SPR^+} based on existence
of automorphism connecting the worldviews of inertial observers, see
Theorem~2.8.20 in \cite{MPhd}.

Now let us prove that the existence of FTL objects (bodies) is
independent of special relativistic kinematics. To do so, let us
introduce a formula of our language stating that there is a
superluminal body.
\begin{description}\label{ftlbody}
\item[\underline{\ax{\exists FTLBody}}] There is an inertial observer
  according to who a body moving faster than the speed of light:
\begin{multline*}
\exists mb\vx\vy \,\big[\IOb(m)\land \W(m,b,\vx)\land
\W(m,b,\vy)\\\land \sqspace(\vx,\vy)>\ls_m^2\cdot \timed(\vx,\vy)^2\big].
\end{multline*}
\end{description}

\begin{thm}\label{thm-indep}
\begin{equation*}
\ax{SR}\not\models\ax{\neg\exists FTLBody}\enskip\text{ and } \enskip
\ax{SR}\not\models\ax{\exists FTLBody}
\end{equation*}
\end{thm}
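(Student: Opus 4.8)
The plan is to prove each non-entailment by exhibiting a model of $\ax{SR}+\ax{AxThExp}$ that witnesses it: for the second claim I need a model in which $\ax{\exists FTLBody}$ is false, and for the first claim a model in which $\ax{\exists FTLBody}$ is true (so that $\ax{\neg\exists FTLBody}$ fails). In both cases the only realistic way to establish the strong principle of relativity $\ax{SPR^+}$ — which quantifies over \emph{all} first-order formulas with one free variable of sort $\B$ — is through Proposition~\ref{prop-aut}: I will arrange each model so that its Poincar\'e symmetry acts by automorphisms fixing the quantities and carrying any inertial observer to any other, whereupon $\ax{SPR^+}$ comes for free.

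For the second non-entailment I build the standard Minkowski model $\mathfrak{M}_1$ over the real field (any Euclidean ordered field would do, and $\ax{AxThExp}$ indeed requires enough square roots for the Lorentz boost to every subluminal velocity to exist). Its bodies are inertial observers and photons only: I parametrize observers by timelike worldlines equipped with an origin and an orthonormal frame so that $\ax{AxSelf}$ holds, photons by lightlike lines, and I define $\W$ so that every worldview transformation $\w_{mk}$ is a Poincar\'e transformation. Then $\ax{AxOField}$, $\ax{AxEv}$, $\ax{AxSelf}$, $\ax{AxSymD}$ and $\ax{AxLight}$ are routine; $\ax{AxThExp}$ holds because I include an observer on every subluminal line; and $\ax{SPR^+}$ holds by Proposition~\ref{prop-aut} since the Poincar\'e group acts transitively on observer frames by automorphisms. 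As no body has a spacelike worldline, $\sqspace(\vx,\vy)>\ls_m^2\cdot\timed(\vx,\vy)^2$ never holds for a single body, so $\ax{\exists FTLBody}$ fails and $\ax{SR}+\ax{AxThExp}\not\models\ax{\exists FTLBody}$.

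For the first non-entailment I modify $\mathfrak{M}_1$ to $\mathfrak{M}_2$ by adding, for \emph{every} spacelike affine line, a new body whose worldline according to each observer is the image of that line under the appropriate worldview transformation. The decisive point is that Poincar\'e transformations map spacelike lines bijectively onto spacelike lines, so the entire body set of $\mathfrak{M}_2$ is invariant under the same symmetry group that acted on $\mathfrak{M}_1$. Hence the automorphisms connecting any two observers extend to automorphisms of $\mathfrak{M}_2$, and Proposition~\ref{prop-aut} again yields $\ax{SPR^+}$; the remaining axioms survive because the new bodies are neither photons nor observers and do not touch the self-worldline or light-cone conditions, so $\ax{AxLight}$, $\ax{AxEv}$, $\ax{AxSelf}$, $\ax{AxSymD}$ and $\ax{AxThExp}$ still hold. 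Each new body satisfies $\sqspace(\vx,\vy)>\ls_m^2\cdot\timed(\vx,\vy)^2$ for every observer $m$, so $\ax{\exists FTLBody}$ is true in $\mathfrak{M}_2$, whence $\ax{\neg\exists FTLBody}$ fails and $\ax{SR}+\ax{AxThExp}\not\models\ax{\neg\exists FTLBody}$.

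I expect the genuine obstacle to be verifying $\ax{SPR^+}$: one must check that the intended symmetries really are automorphisms of the whole model and that they act transitively on inertial observers while leaving the quantities fixed. This is precisely where adding only a Poincar\'e-invariant family of faster-than-light bodies (\emph{all} spacelike lines, not merely some) is essential, since any asymmetric choice would supply a first-order law distinguishing observers and thereby destroy $\ax{SPR^+}$. A secondary point to treat with care is the bookkeeping that keeps $\ax{AxEv}$ and $\ax{AxSelf}$ compatible with the enlarged body set, but this reduces to the equivariance of $\W$ under the worldview transformations and raises no new difficulty.
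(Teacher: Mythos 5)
Your proposal is correct and follows essentially the same route as the paper: both non-entailments are witnessed by a Minkowski-style model over the reals whose observers are (in effect) Poincar\'e transformations, with $\ax{SPR^+}$ obtained via Proposition~\ref{prop-aut} from the transitive automorphism action, and with the FTL model produced by adjoining a Poincar\'e-invariant family of lines as extra bodies (the paper adds \emph{all} lines rather than only the spacelike ones, an immaterial difference). You also correctly isolate the key point that the added family must be invariant under the symmetry group so that the automorphisms extend.
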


In the proof of Theorem~\ref{thm-indep}, we will use the following
concepts. The {\bf identity map} is defined as:
\begin{equation}
\Id(\vx)\de\vx\enskip\text{ for all }\enskip\vx\in\Q^d.
\end{equation}
The {\bf time-axis} is defined as the following subset of the
coordinate system $\Q^d$:
\begin{equation}\label{eq-taxis} 
\taxis\de\{\,\vx\in\Q^d\::\: x_2=\ldots=x_d=0\,\}.
\end{equation}
We think of functions as special binary relations. Hence we compose
them as relations.  The {\bf composition} of binary relations $R$ and
$S$ is defined as:
\begin{equation}\label{rcomp}
{R \fatsemi S}\de \{\,\langle a,c\rangle\::\: \exists b\,
[R(a,b)\land S(b,c)] \,\}.
\end{equation}
Therefore, 
\begin{equation}\label{eq-fn}
(g\fatsemi f)(x)=f\big(g(x)\big)
\end{equation}
if $f$ and $g$ are functions. 
Let $H$ be a subset of $\Q^d$ and let $f:\Q^d\rightarrow\Q^d$ be a
map. The {\bf $f$-image} of set $H$ is defined as:
\begin{equation}
f[H]\de\{\,f(\vx)\::\:\vx\in H\,\}.
\end{equation}
The {\bf inverse} binary relation of $R$ is defined as:
\begin{equation}\label{rinv}
{R^{-1}}\de \{\,\langle a,b\rangle\::\:  R(b,a) \,\}.
\end{equation}

\begin{proof}
We are going to prove our statement by constructing two models
$\mathfrak{M}_1$ and $\mathfrak{M}_2$ of \ax{SR} such that in
$\mathfrak{M}_1$ there are FTL bodies and in $\mathfrak{M}_2$ there
are no FTL bodies.  There will be only a slight difference in the two
constructions. Therefore, we are going to construct the two models
simultaneously.

Let $\langle \Q, +,\cdot,\le\rangle$ be the ordered field of real
numbers. Let us introduce the following sets, which will be unary
relations on set $\B$ after we will define $\B$:
\begin{equation}\label{eq-IOb}
\IOb\de\{\,\text{Poincar\'e transformations of }\Q^d\,\},
\end{equation}
\begin{equation}\label{eq-B}
\LS \de \{\, \text{lines of slope 1 in }\Q^d\,\},
\enskip\text{ and }\enskip
\IB\de\{\,\text{lines in }\Q^d\,\}.
\end{equation}

Let binary relation $\Ph$ be defined as:
\begin{equation}
\Ph(p,b) \defiff p\in \LS \land b=p.
\end{equation}

\begin{figure}
\psfrag{mm}[bc][bc]{$m$}
\psfrag{m}[tl][tl]{$m$}
\psfrag{p}[tr][tr]{$p$}
\psfrag{id}[tl][tl]{$\Id$}
\psfrag{b}[br][br]{$b$}
\psfrag{x}[tl][tl]{$\vx$}
\psfrag{mx}[tl][tl]{$m(\vx)$}
\includegraphics[keepaspectratio,width=0.8\textwidth]{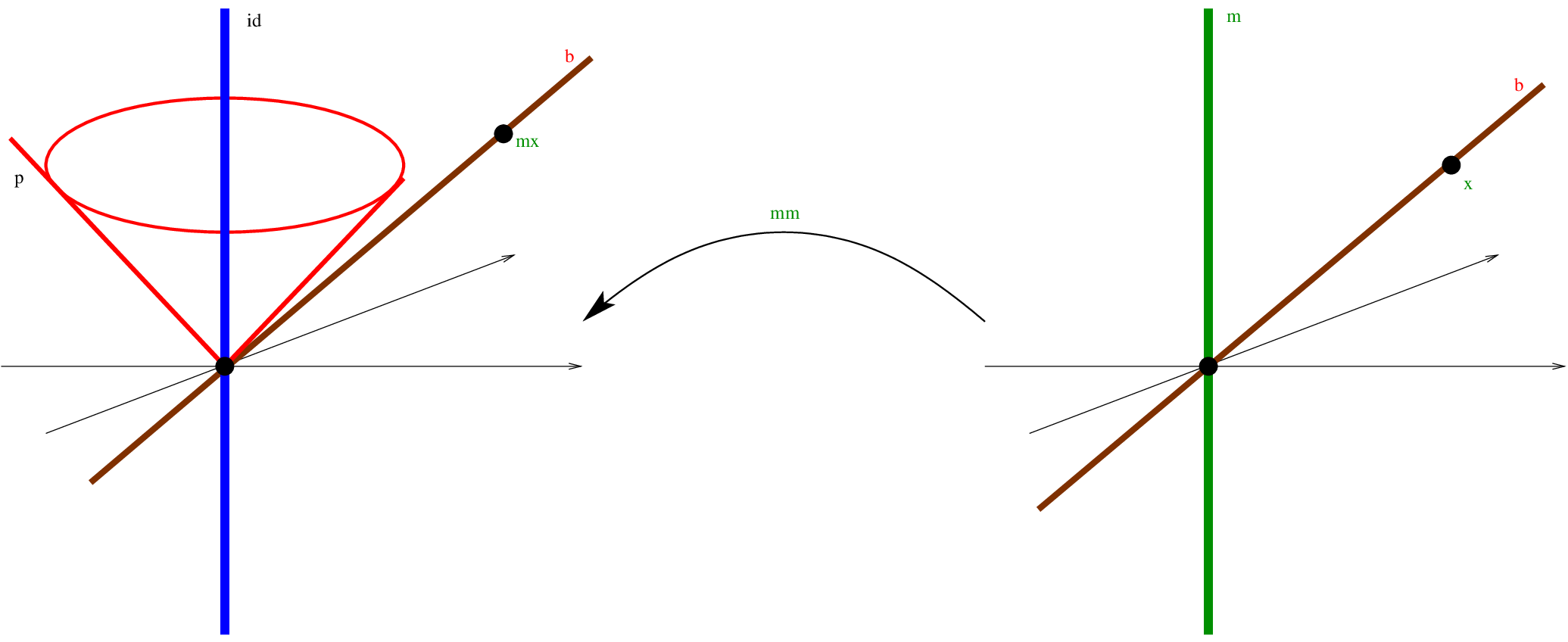}
\caption{Illustration for the proof of Theorem~\ref{thm-indep}}
\label{fig-w}
\end{figure}

The only difference between the construction of models
$\mathfrak{M}_1$ and $\mathfrak{M}_2$ is in the definition of the set
of bodies. In models $\mathfrak{M}_1$ and $\mathfrak{M}_2$, the
respective sets of bodies are defined as:
\begin{equation}
\B_1=\IOb\cup\LS\cup\IB\enskip \text{ and }\enskip \B_2=\IOb\cup\LS.
\end{equation}
First we are going to give the worldview of observer
$\Id$. Let 
\begin{equation}
\W(\Id,\Id,\vx)\defiff x_2=\ldots=x_d=0;
\end{equation} 
for any other inertial observer $m$, let
\begin{equation}\label{eq-Idm}
\W(\Id,m,\vx) \defiff \vx\in m[{\taxis}];
\end{equation}
and for any body $b$ which is not an  observer, i.e.,
$b\in\B\setminus\IOb$, let
\begin{equation}\label{eq-Idb}
\W(\Id,b,\vx)\defiff \vx\in b.
\end{equation}
Now the worldview of observer $\Id$ is given.  From the worldview of
$\Id$, we construct the worldview of another inertial observer $m$ as
follows:
\begin{equation}\label{ww-def}
\W(m,b,\vx)\defiff \W\big(\Id,b,m(\vx)\big)
\end{equation}
for all body $b\in\B$, see Figure~\ref{fig-w}.  Now models $\mathfrak{M}_1$ and
$\mathfrak{M}_2$ are given.

By the above definition of $\W$, if $m$ and $k$ are inertial
observers, then
\begin{equation}\label{eq-iob}
\W(m,k,\vx)\enskip \text{ holds iff }\enskip m(\vx)\in k[\taxis],
\end{equation}
and if $m\in\IOb$
and $b\in\B\setminus\IOb$, then 
\begin{equation}\label{eq-b}
\W(m,b,\vx)\enskip\text{ holds iff }\enskip
m(\vx)\in b.
\end{equation}
The worldview transformations between inertial
observers $m$ and $\Id$ is $m$, i.e., $\w_{m\Id}=m$ by equation
\eqref{ww-def}. Therefore, the worldview transformation between
inertial observers $m$ and $k$ is $m\fatsemi k^{-1}$, i.e., 
\begin{equation}\label{eq-w}
\w_{mk}=m\fatsemi k^{-1}
\end{equation}
since $\w_{mk}=\w_{m\Id}\fatsemi\w_{\Id k}$ and $\w_{\Id k}=(\w_{k\Id
})^{-1}$ by the definition of the worldview
transformation \eqref{eq-ww}. Specially, the worldview transformations between
inertial observers are Poincar\'e transformations in these models (as
Theorem~\ref{cor-poi} requires it). Hence
\begin{equation}\label{eq-bij}
\w_{mk} \text{ is a bijection for all inertial observers $m$ and $k$.}
\end{equation}

Now we are going to show that $\mathfrak{M}_1$ and $\mathfrak{M}_2$
 are models of axiom system \ax{SR}.  It is clear that axiom
 \ax{AxOField} is valid in these models since the ordered field of
 real numbers is an ordered field.

Axiom \ax{AxEv} is valid in models $\mathfrak{M}_1$ and
$\mathfrak{M}_2$ since worldview transformations between inertial
observers are bijections by \eqref{eq-bij}.

It is clear that axiom \ax{AxSelf} is valid in these models by the
definition of worldview relation $\W$ since
\begin{multline*}
\W(m,m,\vx)\stackrel{\eqref{eq-iob}}{\iff} m(\vx)\in
m[\taxis]\\\stackrel{\eqref{eq-IOb}}{\iff} \vx\in\taxis
\stackrel{\eqref{eq-taxis}}{\iff} x_2=\ldots=x_d=0.
\end{multline*}

It is clear that \ax{AxLight} is valid in models $\mathfrak{M}_1$ and
$\mathfrak{M}_2$ by the construction the worldview of inertial
observer $\Id$. Moreover, the speed of light is 1 for observer $\Id$.
Since Poincar\'e transformations take lines of slope 1 to lines of
slope 1, the speed of light is $1$ according to every inertial
observer, which is the second half of \ax{AxSymD}.

Any Poincar\'e transformation $P$ preserves the spatial distance of
points $\vx,\vy\in\Q^d$ for which $x_1=y_1$ and
$P(\vx)_1=P(\vy)_1$. Therefore, inertial observers agree as to the
spatial distance between two events if these two events are
simultaneous for both of them. We have already shown that the speed of
light is 1 according to each inertial observer in models  $\mathfrak{M}_1$ and
$\mathfrak{M}_2$. Consequently, axiom
\ax{AxSymD} is also valid in these models.

By Lemma~\ref{prop-aut}, to prove that axiom \ax{SPR^+} is valid in
models $\mathfrak{M}_1$ and $\mathfrak{M}_2$, it is enough to show
that, for all inertial observers $m$ and $k$, there is an automorphism
$Aut_{mk}$ fixing the quantities and taking $m$ to $k$.

Let us fix two arbitrary inertial observers $m$ and $k$. Let
$Aut_{mk}$ be the following map
\begin{equation}\label{eq-autiob}
Aut_{mk}(h)\de h\fatsemi
\w_{mk}=h\fatsemi m^{-1}\fatsemi k
\end{equation}
for all inertial observers $h$. 
\begin{equation}\label{eq-autb}
Aut_{mk}(q) \de q \enskip\text{ and }\enskip Aut_{mk}(b)\de\w_{mk}[b]
\end{equation}
for all quantity $q$ and for all body $b$ which is not an inertial
observer.  Since $\w_{mk}$ is a bijection, it is clear that $Aut_{mk}$
is also a bijection of $\mathfrak{M}_1$ and $\mathfrak{M}_2$. It is
clear that $Aut_{mk}(m)=k$ since $Aut_{mk}(m)=m\fatsemi m^{-1}\fatsemi
k=k$. It is also clear that $Aut_{mk}$ is fixing $\Q$ by its
definition.  Now we show that $Aut_{mk}$ is an automorphism. Since
$Aut_{mk}$ is fixing the quantities, $x\le y$, etc.\ hold iff
$Aut_{mk}(x)\le Aut_{mk}(y)$, etc. hold.

Let $h$ be a body.  We have to show that $\IOb(h)$ holds iff
$\IOb\big(Aut_{mk}(h)\big)$ holds. This is so because
$Aut_{mk}(h)=h\fatsemi m^{-1}\fatsemi k$ is a Poincar\'e
transformation iff $h$ is a Poincar\'e transformation.

Let $p$ and $b$ be bodies.  We have to show that $\Ph(p,b)$ holds iff
$\Ph\big(Aut_{mk}(p),Aut_{mk}(b)\big)$ holds. Relation $\Ph(p,b)$
holds iff $p\in\LS$ and $p=b$ hold. Since transformation $\w_{mk}$ is
a bijection taking lines of slope 1 to lines of slope 1 (since it is a
Poincar\'e transformation), $p\in\LS$ iff
  $Aut_{mk}(p)\in Aut_{mk}[\LS]$ because
  $Aut_{mk}[\LS]=\LS$ and $Aut_{mk}(p)=\w_{mk}(p)$. Therefore, relation $\Ph(p,b)$ holds iff
$\Ph\big(Aut_{mk}(p),Aut_{mk}(b)\big)$ holds since $p=b$ holds iff
$\w_{mk}[p]=\w_{mk}[b]$ holds.

Let $h$ be an inertial observer, $b$ be a body and $\vx$ be a
coordinate point. We have to show that relation $\W(h,b,\vx)$ holds iff
relation $\W\big(Aut_{mk}(h),Aut_{mk}(b),Aut_{mk}(\vx)\big)$ holds.  There are two
cases: $b\in\IOb$, and  $b\in \B\setminus\IOb$. 
If $b\in \IOb$, then
\begin{multline*}
\W\big(Aut_{mk}(h),Aut_{mk}(b),Aut_{mk}(\vx)\big)
\stackrel{\eqref{eq-autiob}}{\iff} \W(h\fatsemi \w_{mk}, b\fatsemi
\w_{mk},\vx) \\\stackrel{\eqref{eq-iob}}{\iff} (h\fatsemi
\w_{mk})(\vx)\in (b\fatsemi \w_{mk})[\taxis]\stackrel{\eqref{eq-fn}}{\iff}
\w_{mk}\big(h(\vx)\big)\in
\w_{mk}\big[b[\taxis]\big]\\\stackrel{\eqref{eq-bij}}{\iff} h(\vx)\in
b[\taxis] \stackrel{\eqref{eq-iob}}{\iff} \W(h,b,\vx).
\end{multline*}
If $b\in \B\setminus\IOb$, then
\begin{multline*}
\W\big(Aut_{mk}(h),Aut_{mk}(b),Aut_{mk}(\vx)\big)\\\stackrel{\eqref{eq-autb}}{\iff}
\W(h\fatsemi \w_{mk}, \w_{mk}[b],\vx) \stackrel{\eqref{eq-b}}{\iff}
(h\fatsemi\w_{mk})(\vx)\in \w_{mk}[b]\\\stackrel{\eqref{eq-fn}}{\iff}
\w_{mk}\big(h(\vx)\big)\in\w_{mk}[b]\stackrel{\eqref{eq-bij}}{\iff} h(\vx)\in
b\stackrel{\eqref{eq-b}}{\iff} \W(h,b,\vx).
\end{multline*}
Therefore, by Lemma~\ref{prop-aut}, axiom \ax{SPR^+} is valid in the
models.

The $\exists h\, \IOb(h)$ part of axiom \ax{AxThExp} is valid in models
$\mathfrak{M}_1$ and $\mathfrak{M}_2$, since there are Poincar\'e
transformations (e.g., $\Id$ is one).  Since \ax{SPR^+} is valid in
$\mathfrak{M}_1$ and $\mathfrak{M}_2$, it is enough to prove the
second part of \ax{AxThExp} only for observer $\Id$ instead of for all
inertial observer $m$.  To do so, let $\vx$ and $\vy$ be coordinate
points such that $\sqspace(\vx,\vy)<\timed(\vx,\vy)^2$. It is easy to
see that there is a Poincar\'e transformation $k$ for which
$\vx,\vy\in k[\taxis]$.  Therefore, by the definition of worldview
transformation \eqref{eq-ww} and by \eqref{eq-Idm}, $\W(\Id,k,\vx)$
and $\W(\Id,k,\vy)$ hold for inertial observer $k$ as \ax{AxThExp}
requires it.

It is clear by the constructions of the models that in $\mathfrak{M}_1$
there are FTL bodies and in $\mathfrak{M}_2$ there is no
FTL body. So neither \ax{\exists FTLBody} nor
$\ax{\neg\exists FTLBody}$ follows from \ax{SR}, and this is what we
wanted to prove.
\end{proof}

By Theorem~\ref{thm-indep}, the existence of FTL bodies is independent
of the theory of special relativity. Specially, it is consistent with
axiom system \ax{SR} that there are FTL objects (bodies). Because
\ax{SR} is based on the formulation of Einstein's original postulates
of special relativity, we have formally proved that the kinematics of
Einstein's special theory of relativity is consistent with the
existence of FTL particles.

That \ax{\exists FTLBody} is logically independent of \ax{SR} is
completely analogous to that the axiom of parallels is independent of
the rest of the axioms of Euclidean geometry or to that the continuum
hypothesis is independent of Zermelo-Fraenkel set theory. It means
that we can (consistently) extend theory \ax{SR} both with assumption
\ax{\exists FTLBody} or \ax{\neg\exists FTLBody}. However, until we
have a good reason to assume either of them, it is better to work
without these extra assumptions.

\section{Sending information back to the past}
\label{sec-ftl}

One of the nontrivial consequences of the possibility of sending out
FTL particles is that, in some sense, we can send signals back to the
past.  More precisely, when two inertial observers are moving relative
to each other, then both observers can send out a superluminal signal
to the other such that, according to the other, the event of sending
out of the signal happened later than the event of receiving of the
signal. To illustrate this fact, let us consider two observers
moving relative to each other with speed $0.5\, c$. Let us call one of
these observers ``stationary observer'' and the other ``moving
observer.''  The worldview of the ``moving observer'' is drawn into
the worldview of the ``stationary observer'' in
Figure~\ref{fig-signal}. Let the ``moving observer'' send out a signal
with speed $3\, c$ at event A. It can be read from the figure that
this signal reach the ``stationary observer'' at event B which is
earlier than event A according to the ``stationary observer.''

\begin{figure}
\psfrag{1}{1}
\psfrag{2}{2}
\psfrag{3}{4}
\psfrag{5}{5}
\psfrag{6}{6}
\psfrag{A}[tl][tl]{A}
\psfrag{B}[br][br]{B}
\psfrag{so}{``stationary observer''}
\psfrag{mo}{``moving observer''}
\psfrag{S}[tr][tr]{Signal}
\includegraphics[keepaspectratio,width=0.8\textwidth]{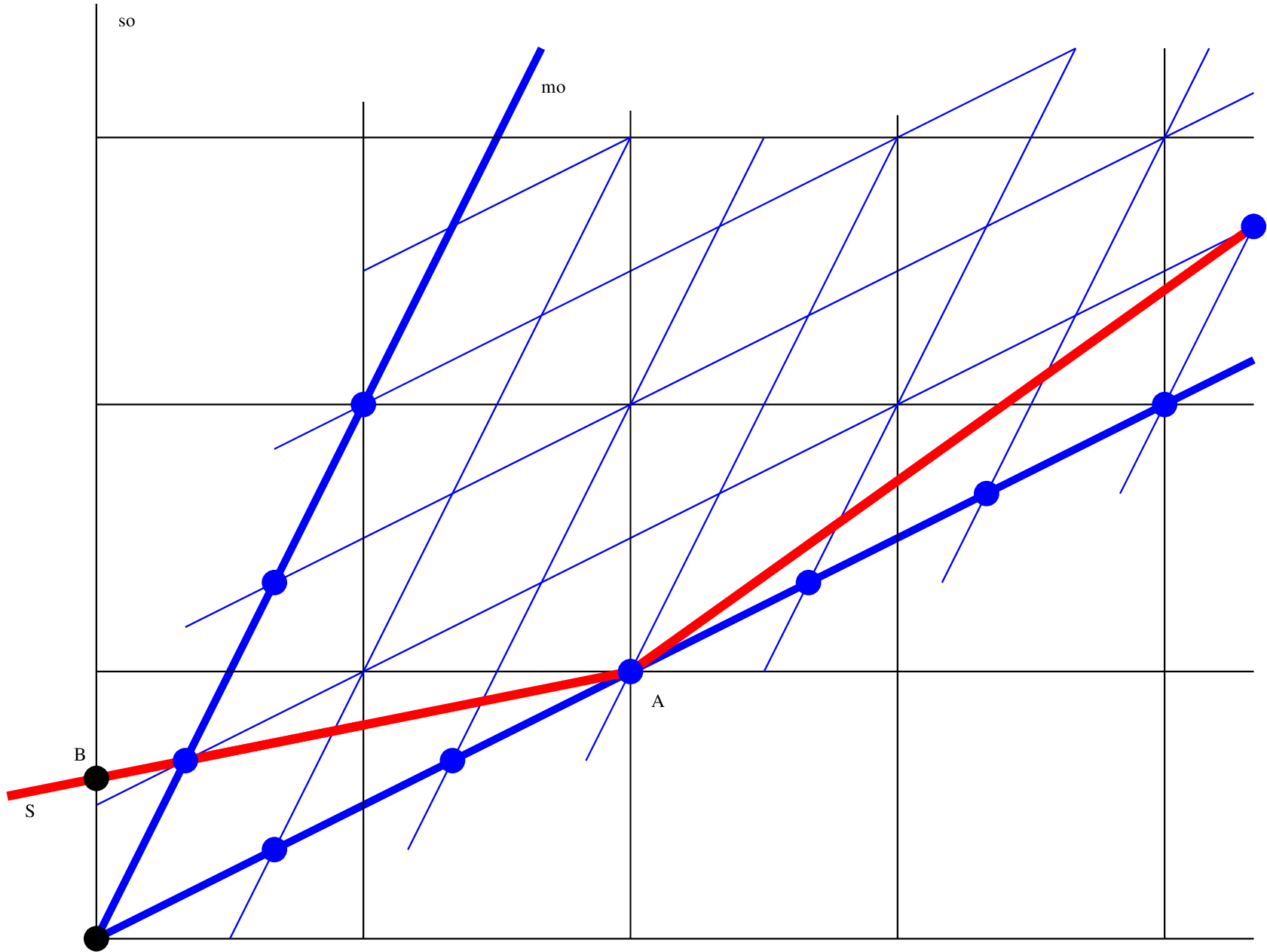}
\caption{Sending a superluminal signal back to the past: a signal of
  speed $3\, c$ which is sent out by the ``moving observer'' (moving
  with speed $0.5\, c$) at event A reaches the ``stationary observer''
  at event B which is earlier according to the ``stationary
  observer.''}
\label{fig-signal}
\end{figure}

If we can send bodies (particles) back to the past, then (in
  principle) we can also send any information back bit by bit. This
can easily be done by sending or not sending a signal back in time at
regular intervals; and the lack of signal means 0 and the presence of
the signal means 1 in the given bit. Using this protocol, we can send
arbitrary long information back to the past.

The possibility of sending back signals or even particles to the past
foreshadows the possibility of deriving a logical contradiction. In the
next section, we are going to show why this is not necessarily so even
if we leave the language of the kinematics.

\section{Why does not the possibility of sending
 particles back to the past lead to a logical contradiction?}
\label{sec-no}

Let us now illustrate with one possible informal explanation why the
existence of FTL objects does not necessarily lead to a logical
contradiction, let us consider the following well-known thought
experiment.  Let a tachyon cannon fire a superluminal projectile which
is reflected back from a moving mirror such that it hits the cannon
right before it has shot the projectile out, see
Figure~\ref{fig-cannon}.  The argument usually continues as
follows. If the cannon shoots the projectile at event A, the
projectile destroys the cannon at event C before event A. Therefore,
the cannon cannot shoot the projectile at event A.  This argument
seemingly contradicts the plausible assumption that our cannon can
shoot its FTL projectile anytime at any direction. So at first sight
we have derived a logical contradiction from the existence of FTL
particles.

However, if we think it over more carefully, we see that when we
calculated (speculated) the trajectory of the tachyon beam we did not
take into consideration of the causal effect of the projectile coming
back from the ``future.'' If we take this casual effect into
consideration, we can tell a logically consistent story in which the
tachyon cannon hits itself before the ignition of the projectile. For
example, we send out the mirror; and before we shoot the cannon to
destroy itself, a tachyon projectile hits the cannon (at event C) and
because of this damage the cannon shoots a tachyon projectile (at
event A) which reflects back from the mirror (at event B) so that it
becomes the one that hit the cannon (at event C).\footnote{According
  to the ``mirror'' event A is also later than event B. So it would be
  better to tell the story with another cannon in place of the mirror
  which shoots a projectile with speed 5 $c$ in the direction of the
  first cannon when it detects the projectile of the first cannon at
  event B.}  This story is a perfectly consistent story of the self
shooting cannon and a possible candidate of the resolution of this
``cannon paradox.''

This informal explanation is based on Novikov's self-consistency
principle, see, e.g., Lossev--Novikov \cite{Jinn}, which is a
standard way for resolving the causal paradoxes risen by closed
timelike curves in general relativity. For other ideas and
  informal explanations resolving the causal paradoxes of FTL
  particles, see, e.g., Arntzenius~\cite{Arntzenius},
Recami~\cite{recami-ftl}, \cite{Recami09}, Selleri~\cite{selleri-ftl}.

\begin{figure}
\psfrag{L}[bl][bl]{Tachyon Cannon}
\psfrag{M}[tl][tl]{Mirror}
\psfrag{A}[tr][tr]{A}
\psfrag{B}[tl][tl]{B}
\psfrag{C}[tr][tr]{C}
\includegraphics[keepaspectratio,width=0.8\textwidth]{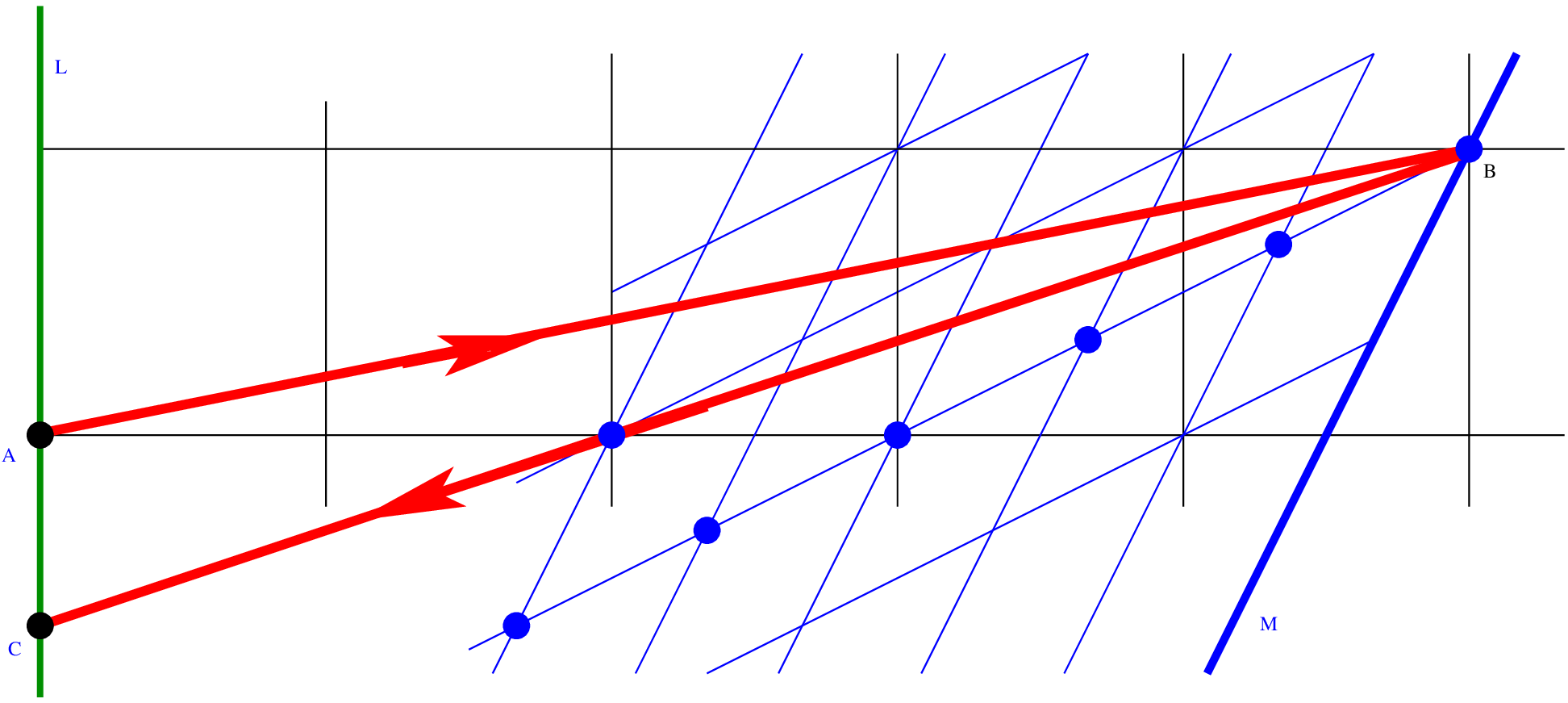}
\caption{Self shooting tachyon cannon: if a tachyon cannon fires an
  FTL tachyon beam (at event A) of speed 5 $c$ to a mirror moving with
  0.5 $c$ relative to the cannon, then the reflection of the tachyon
  beam will hit the cannon (at event C) before it have shot the
  beam.}
\label{fig-cannon}
\end{figure}

Let us note that the ``cannon paradox'' story above was told only in a
framework of kinematics, in which we cannot calculate (just speculate)
the effects of the collisions.  Within the kinematics of special
relativity the existence of FTL objects are perfectly consistent with
the theory, see Theorem~\ref{thm-indep} at
p.\pageref{thm-indep}. However, kinematics does not tell us anything
about the interactions of particles. Therefore, within kinematics we
can only say that the projectile meets the cannon at event C, but we
cannot say anything about the ``damage'' the projectile makes to the
cannon.

It is important to use axiomatic theories for resolving these
paradoxes, because this way we can give mathematical proofs for the
consistency and not just convincing but informal arguments.

\section{Anyway, inertial observers cannot move FTL}
We have seen that the kinematics of special relativity says nothing
about the existence of FTL particles in general. It is important to
note that even \ax{SpecRel} (which is more general than \ax{SR} by
Proposition~\ref{prop-axph}) implies that no inertial observer can
move faster than or with the speed of light if $d\ge 3$, see
\cite[Thm.2.1]{synthese}. For similar results on the impossibility of
existence of superluminal observers, see, e.g., \cite{pezsgo},
\cite{AMNSamples}, \cite{logst}, \cite{MNT}.

At first it may sound plausible to assume that, if inertial
particles can move with a certain speed, then inertial observers can
also move with this speed.  This assumption together with \ax{SpecRel}
(or \ax{SR}) would imply that there are no superluminal particles
since \ax{SpecRel} (and hence \ax{SR}) implies that there are no
FTL inertial observers. The problem
with this ``natural'' assumption is that (together with \ax{SpecRel}
or \ax{SR}) it also implies that there are no particles moving with
the speed of light which directly contradict \ax{AxLight}. So
surprisingly this assumption is not plausible at all since it
contradicts special relativity.

\section{Concluding remarks}

We have axiomatically proved that the existence of superluminal
objects is independent of (hence consistent with) the kinematics of
special relativity, see Theorem~\ref{thm-indep}. 

A future task is to investigate, within a similar axiomatic
  framework, how far this independence result can be extended beyond
kinematics. For a similar independence result of relativistic
  particle dynamics using the axiomatic framework of
  \cite{dyn-studia}, \cite{msz-wku}, \cite[\S 5]{SzPhd}, see
  \cite{MSzFTL}.

It is important to continue this investigation within an axiomatic
framework of logic stating all the assumptions explicitly.
Otherwise, there is a danger that some of our tacit assumptions will
remain hidden and we will not see clearly the logical connection
between our theory and the possibility of the existence of FTL
particles.

\section{Acknowledgment}
I am grateful to Hajnal Andr\'eka, Judit X.\ Madar\'asz, and Istv\'an
N\'emeti for the interesting discussion on the subject and their
valuable comments and suggestions.  This research is
supported by the Hungarian Scientific Research Fund for basic research
grants No.~T81188 and No.~PD84093.

\bibliography{LogRelBib}
\bibliographystyle{plain}

\end{document}